\newtheorem{theorem}{Theorem}
\newtheorem{lemma}{Lemma}
\newenvironment{proofsketch}{%
  \proof}{\endproof}
\title{Mimicry and the Emergence of Cooperative Communication}
\author{
    Dylan Cope \and
    Peter McBurney
    \mbox{}\\
    King's College London \\
    \url{dylan.cope@kcl.ac.uk}
} % email of corresponding author
\begin{document}

\maketitle

\begin{abstract}
    In many situations, communication between agents is a critical component of cooperative multi-agent systems, however, it can be difficult to learn or evolve.
    In this paper, we investigate a simple way in which the emergence of communication may be facilitated.
    Namely, we explore the effects of when agents can mimic preexisting, externally generated useful signals.
    The key idea here is that these signals incentivise listeners to develop positive responses, that can then also be invoked by speakers mimicking those signals.
    This investigation starts with formalising this problem, and demonstrating that this form of mimicry changes optimisation dynamics and may provide the opportunity to escape non-communicative local optima.
    We then explore the problem empirically with a simulation in which spatially situated agents must communicate to collect resources.
    Our results show that both evolutionary optimisation and reinforcement learning may benefit from this intervention.
\end{abstract}

{\def\thefootnote{}\footnotetext{
    \footnotesize
    Dylan Cope is supported by the UKRI Centre for Doctoral Training in Safe and Trusted AI (EPSRC Project EP/S023356/1).
    Accepted for publication in the proceedings of the 2024 \textit{International Conference on Artificial Life (ALIFE24)}.
}}

% \section{Acknowledgements}
% This work was supported by NSF grant No.\ 123456.

\section{Introduction}

The emergence of communication between cooperative co-evolving agents has a chicken-and-egg problem.
For producing a signal to be advantageous, the speaker needs to invoke a desirable behaviour in the listener.
Likewise, from the listener's perspective, attending to signals is only advantageous if they carry pertinent information.
% In Darwinian terms, different signals need to result in variation in the expected fitness.
Therefore, communication only emerges when, by chance, the speaker happens to signal relevant information and the listener happens to respond positively to those signals.
In this paper, we are interested in how the emergence of communication may be facilitated by the capacity for speakers to mimic externally generated signals that already carry valuable information.
The key hypothesis here is that the external signals provide an incentive for the listener to develop a positive response to a stimulus.
Then, the speaker can mimic that signal to also invoke the same beneficial behaviour in the listener.

To investigate this idea, we start with a theoretical analysis of some of the difficulties that emergent communication faces.
We first focus on independent optimisers and show how the potential for mimicry alters the optimisation criteria to help communication develop.
Then we shift to analysing centralised optimisers.
We look at how, despite ameliorating some issues, communication may still fail to develop due to local optima.
However, here again, the capacity for agents to mimic useful external signals improves the potential for communicative strategies to develop.
Finally, we conduct an empirical investigation using a simulation with spatially situated agents collecting resources.
We train agents to solve this task using deep neuroevolution and multi-agent reinforcement learning, to compare different optimisation methods.
These results demonstrate that mimicable signals can have a positive effect, and mimicry does indeed emerge.

Unlike most studies of mimicry, this investigation concerns \textit{cooperative communication}, i.e. signalling between agents acting to achieve shared goals.
This is in contrast to \textit{deceptive communication}, where a preexisting signalling system is exploited by an adversary that mimics signals to its advantage.
However, communication as a tool for cooperation is indispensable for complex collective decision-making.
Furthermore, \cite{grice_logic_1975} suggests that a \textit{cooperative principle} is fundamental to human language.
Therefore, exploring the evolution of cooperative communication may shed light on the emergence of language, or provide insight into developing sophisticated cooperative artificial agents.

\section{Background and Related Work}

\subsection{Mimicry}

Mimicry is a widely studied phenomenon in evolutionary theory \citep{maran_mimicry_2017, kleisner_introduction_2019, wickler_mimicry_1965, quicke_mimicry_2017}, tracing back to the work of \cite{bates_contributions_1862}.
This form of mimicry, called \textit{Batesian mimicry}, involves a system with three parts: a \textit{model} that a \textit{mimic} takes on the perceived characteristics of, to fool a \textit{perceiver} \citep{kleisner_introduction_2019}.
Deceptive communication can take the form of \textit{aggressive mimicry} where a predator exploits a prey, for instance, some spiders lure male moths by mimicking female moth sex pheromones \citep{yeargan_juvenile_1996}.
Alternatively, \textit{defensive mimicry} involves prey exploiting predators, such as non-toxic moths imitating the sounds produced by toxic moths to deter bats \citep{oreilly_deaf_2019}.
In both cases, the mimic generates a signal to trick or deceive another agent.
Researchers in Artificial Life on mimicry have expanded on this work, with deceptive communication studied in ecosystem simulations \citep{hraber_ecology_1997, lehman_surprising_2018,islam_modeling_2016,marriott_social_2018,reynolds_interactive_2011,floreano_evolutionary_2007}.

\subsection{Genetic Algorithms and Neuroevolution}

Genetic Algorithms (GAs) are a class of optimisation algorithms that evolve a population of `genomes' that encode (attempted) solutions to a given task \citep{mitchell_introduction_1996,holland_genetic_1992,floreano_bio-inspired_2008}.
Performance on the task, called fitness, is measured for each member of the population, and then selection and repopulation processes are applied to construct the next generation of the simulation.
The most basic selection method is \textit{truncation}, where a subset of the best individuals are used for repopulation.
The population is restored to its original size by producing \textit{off-spring}.
This may be done by applying random perturbations to parents (i.e. \textit{mutations}), and/or by \textit{sexual recombination} (also called crossover) where segments of two or more genomes are combined.
For almost as long as there have been GAs and artificial neural networks, there have been approaches for optimising network parameters with GAs \citep{ronald_genetic_1994, angeline_evolutionary_1994,stanley_evolving_2002,galvan_neuroevolution_2021}.
In deep reinforcement learning, simple GAs can be effective \citep{such_deep_2017}, and population-based methods have been demonstrated to soften the exploitation-exploration problem \citep{conti_improving_2018,jaderberg_human-level_2019}.

\subsection{Decentralised POMDPs}

A \textit{Decentralised Partially-Observable Markov Decision Process} (Dec-POMDP) is a formal model of a cooperative, multi-agent, sequential decision-making problem defined as a tuple  $\mathcal{M} =(\mathcal{S},\mathcal{A},T,r,\boldsymbol{\Omega},O)$ \citep{oliehoek_concise_2016}, where $\mathcal{S}$ is a set of states, and $\mathcal{A} = \prod_i \mathcal{A}^{i}$ is a product of individual agent action sets.
A \textit{joint action} $\mathbf{a} \in \mathcal{A}$ is a tuple of actions from each agent that is used to compute the environment's transition dynamics, defined by a probability distribution over states $T:\mathcal{S}\times\mathcal{A}\times\mathcal{S}\rightarrow[0,1]$.
Team performance is defined by a cooperative reward function $r: \mathcal{S} \times \mathcal{A} \times \mathcal{S}$ over state transitions and joint actions. $\boldsymbol{\Omega} = \{\Omega^{i}\}$ is a set of observation sets, and $O:S\rightarrow\prod_i \Omega^i $ is an observation function.
For the purposes of this paper, each agent $i$ follows a policy $\pi^i_\theta$ that maps observation sequences to action distribution, where $\theta$ is some parameterisation that may optimised (using methods that we will discuss in the following subsections).

% A \textit{trajectory} for an agent $i$ is a sequence of observation-action-reward tuples $\tau_i \in \mathcal{T}_i = (\Omega_i\times\mathcal{A}_i\times\mathds{R})^*$. For a set of policies $\Pi = \{\pi_i\}$, a joint trajectory is $\boldsymbol{\tau} \in \mathcal{T} = (\boldsymbol{\Omega}\times\mathcal{A}\times\mathds{R})^*$, and we can denote the distribution of joint trajectories for this set of policies acting in the environment as $\mathcal{M}|_{\Pi}$.
% In this work, we will only consider finite-horizon Dec-POMDPs, so the lengths of trajectories will always be bounded.
% The \textit{total reward} for this trajectory is the sum of rewards along the sequence, denoted $R(\boldsymbol{\tau})$.
% The expected sum of rewards for a set of policies will be denoted $R(\Pi) = \mathds{E}_{\boldsymbol{\tau} \sim \mathcal{M}|_{\Pi}}[R(\boldsymbol{\tau})]$. 
% We denote the distribution over future trajectories for a policy as $\pi(\tau|.)$. The \textit{return} of a trajectory is computed as the discounted sum of rewards: $V(\tau) = \sum_{k = 0}^{|\tau|} \gamma^k r_k$. 

\subsection{Emergent Communication}

Emergent communication is the study of agents that learn or evolve to communicate.
Each agent's action set in the Dec-POMDP can be expressed as $\mathcal{A}_i = \mathcal{A}_i^e \times \mathcal{A}^c_i$ or $\mathcal{A}_i = \mathcal{A}_i^e \cup \mathcal{A}^c_i$, where $\mathcal{A}_{i}^c$ is a set of \textit{communicative actions}, and $\mathcal{A}_i^e$ is a set of \textit{environment actions}.
% Typically, the communicative actions can further be written as the product of one-way communication channels from $i$ to $j \in C_i$ using a discrete \textit{message} alphabet $\Sigma$, i.e. $\mathcal{A}^c_i=\Sigma^{|C_i|}$.
% These are cheap-talk channels, meaning there is no cost to communication.
This variant of a Dec-POMDP is known as a Dec-POMDP-Com \citep{goldman_decentralized_2004, goldman_communication-based_2008, oliehoek_concise_2016}.
The messages have no prior semantics as the transition function of the Dec-POMDP only depends on the environment actions ${\mathcal{A}_i^e}$, and agents are not a priori programmed to send messages with predetermined meanings.
Therefore, semantics emerge through maximising reward.

% The evolution of communication the subject of a large variety of Artificial Life  research 
Many works in Artificial Life research have studied the evolution of communication \citep{ackley_altruism_1994,oliphant_dilemma_1996,bullock_evolutionary_1997,parisi_artificial_1997,noble_adaptive_2002,floreano_evolutionary_2007,mirolli_evolving_2010,fox_nectar_2023}, with a variety of approaches and settings considered.
However, to the best of our knowledge this is the first study of the effects of mimicry on the emergence of communication in a cooperative setting.

\subsection{Deep Multi-Agent Reinforcement Learning}

Reinforcement learning (RL) is a paradigm of machine learning algorithms generally formulated in terms of solving Markov decision processes \citep{sutton_reinforcement_1998}.
Advances in deep learning over the last decade have been applied to RL in the form of value estimation \citep{mnih_human-level_2015,hessel_rainbow_2018} and policy gradient optimisation \citep{schulman_proximal_2017,lillicrap_continuous_2016,wang_sample_2017,schulman_trust_2015}.
For this work, we will use the Multi-Agent Proximal Policy Optimisation (MAPPO) algorithm \citep{yu_surprising_2022}, a centralised multi-agent policy gradient method for training cooperative agents.

When applied to multi-agent communication problems, deep multi-agent RL (DMARL) enables the study of more complex behaviours.
However, a fundamental issue with multi-agent learning problems is that when each agent learns independently, the learning dynamics may be chaotic \citep{hussain_asymptotic_2023,hussain_stability_2024}.
This is especially tricky for communication, where the only thing that grounds a message's consequences are the effects on the receiver's actions.
\cite{foerster_learning_2016} and \cite{sukhbaatar_learning_2016} were the first to approach this issue by developing differentiable communication channels between agents that can be discretised after training.

% \section{Emergent Communication with \\ Independently Optimising Agents}

\section{Independently Optimised Communication}

In this section, we outline the difficulty for two independent optimising agents to develop communication with one another.
We will look at each agent as an optimiser that maximises a utility function where the other agent is treated as static.
Hence, in this case, the `chicken-and-egg' issue is a matter of each agent optimising non-stationary criteria in which the other agent appears to be random.

Consider a listener $\pi^l_{\theta^l}$ (a policy parameterised by $\theta^l$) that observes signals $m \in \Sigma$ and takes actions $a \sim \pi^l_{\theta^l}(m)$.
These actions result in some utility $U(a|z)$, where $z \in \mathcal{Z}$ is some latent variable.
The agent will optimise $\theta^l$ to maximise the expected utility. 
Assuming that $U$ is known, we define the expected utility for taking action $a$ when observing $m$:
\begin{align}\label{eqn:listener_eu}
    E[U(a|m)] = \sum_{z\in \mathcal{Z}} P(z | m) U(a|z)
\end{align}
In settings where a speaker and listener are co-evolving or co-learning, the initial signals generated by the speaker are random.
Therefore, messages are independent of the latent variable and $P(z | m) = P(z)$.
As a result, when optimising the listener for this task the signals provide no information.
In the worst case, the quantity
\begin{align}\label{eqn:listener_eu_with_uninformative_msgs}
\sum_{z\in \mathcal{Z}} P(z) U(a|z)
\end{align}
is the same for all actions, meaning the listener cannot do better than taking random actions.

Next, we can construct a similar dilemma from the speaker's perspective.
The speaker $\pi^s_{\theta^s}$, a policy parameterised by $\theta^s$, observes the private variable $z \in \mathcal{Z}$ and produces some signal $m \sim \pi^s_{\theta^s}(z)$.
The expected utility for producing a given message is dependent on the listener, so:
\begin{align}\label{eqn:speaker_eu}
    E[U(m|z)] = \sum_{a \in \mathcal{A}} \pi^l_{\theta^l}(a|m) U(a|z)
\end{align}
But again, at first, the listener produces random actions. So when optimising the expected utility for the speaker, we are effectively optimising the following expression,
\begin{align}\label{eqn:speaker_eu_random_listener}
    \frac{1}{|\mathcal{A}|}\sum_{a \in \mathcal{A}} U(a|z),
    \quad\text{as}~\forall a,~ \pi^l_{\theta^l}(a|m)=\frac{1}{|\mathcal{A}|}
\end{align}
which is independent of the speaker's message.
In other words, the speaker cannot affect the expected utility.

\subsection{Externally Generated Useful Signals}\label{sec:external_sources}

To introduce the possibility of mimicry we will introduce \textit{externally generated} signals that already carry useful information.
In this setting, sometimes the listener observes a signal generated by the speaker, otherwise, it comes from an external source.
Importantly, the listener does not know which source each signal has come from.

Let the proposition $\mathrm{S}$ denote that a signal is generated by the external source.
We can decompose $P(z | m)$ from Equation \ref{eqn:listener_eu} by into a weighted sum of these two cases:
\begin{align}
    P(z | m) = P(z | m, \mathrm{S}) P(\mathrm{S}) + P(z | m, \neg \mathrm{S}) P(\neg \mathrm{S})
\end{align}
The speaker being untrained means $P(z | m, \neg \mathrm{S}) = P(z)$.
Making the substitution into Equation \ref{eqn:listener_eu}:
% ORIGINAL ====
% \begin{align}
%     E[U(a|m)] = \sum_{z\in \mathcal{Z}} \Big(&P(z | m, E) P(E) \\&+ P(z) P(\neg E)\Big) U(a|z)
% \end{align}
% 
% JUST WITH NUMBER FIXED ====
% \begin{equation}
%     \begin{aligned}
%         E[U(a|m)] = \sum_{z\in \mathcal{Z}} \Big(&P(z | m, E) P(E) \\&+ P(z) P(\neg E)\Big)U(a|z)
%     \end{aligned}
% \end{equation}
% 
% AS I THINK OPTIMAL ===
\begin{equation}
    \begin{aligned}
        E[U(a|m)] = \sum_{z\in \mathcal{Z}} \Big[\Big(&P(z | m, \mathrm{S}) \cdot P(\mathrm{S}) \\
        &+ P(z) P(\neg \mathrm{S})\Big) \cdot U(a|z)\Big]
    \end{aligned}
\end{equation}
% 
% \begingroup\makeatletter\def\f@size{7}\check@mathfonts
% \begin{equation}
%     \begin{aligned}
%         E[U(a|m)] = \sum_{z\in \mathcal{Z}} \Big[\Big(P(z | m, \mathrm{S}) \cdot P(\mathrm{S})
%         + P(z) P(\neg \mathrm{S})\Big) \cdot U(a|z)\Big]
%     \end{aligned}
% \end{equation}
% \endgroup

Thus, supposing that $P(\mathrm{S})$ is non-zero and $P(z|m,\mathrm{S}) \neq P(z)$, the listener is incentivised to modify its actions depending on the signal it observes.
As a result, the expected utility for the speaker, defined by Equation~\ref{eqn:speaker_eu}, does not reduce to Equation~\ref{eqn:speaker_eu_random_listener} as the listener no longer produces random actions.
Further, this means that is that the speaker is incentivised to mimic the externally generated signals, as those are what invoke the useful behaviours in the listener.

It is worth noting that a balance needs to be struck between giving both the listener and the speaker opportunities to improve.
When $P(E)$ is high, the speaker will emit many signals that are never received.
This is an issue for optimisation methods such as genetic algorithms or reinforcement learning that rely on estimating the expected utility from samples.
As $P(E)$ increases, there will be greater variance in the expected utility estimates and progress will be slow (or impossible).
This implies that this form of mimicry may provide the most advantage when the external source of information is removed at some point or fades over time.
However, we will leave that situation for future work.

\section{Getting Stuck in Local Optima}

Many forms of optimisation are \textit{centralised}, meaning that the problem of non-stationarity can be resolved.
Yet, communication may fail to emerge if the optimisation process gets stuck in a local optimum.
In the following analysis, we will consider a simple set-up in which both communicative and non-communicative strategies are possible.
The aim is to show that even when communicative strategies are globally optimal, escaping local non-communicative optima may require a significant jump in communicative capabilities.
Thereby making the transition from non-communicative to communicative behaviour unlikely.

\subsection{To Speak, or Not to Speak}

To make our analysis more concrete, we will construct a simple problem where communication is optional by defining a utility function $U$.
In the previous section, we considered a setting where the only choice for the speaker was which signal to send.
Here, the speaker chooses from a set of actions $\mathcal{A}^s$, where only a subset of these actions are signals: $\Sigma \subset \mathcal{A}^s$.
The rest of the actions represent the speaker choosing to not communicate and acting independently of the listener.
When the speaker does not send a signal, the listener observes a `silent' symbol.

However, this setting is still cooperative, so these actions will still contribute some expected utility to the group.
We will introduce a simple utility function $U$ composed of $U^s$, a function of the speaker's actions, and $U^l$, a function of the listener's actions.
\begin{equation}
    \begin{aligned}
        &U(a^s, a^l | z) = U^s(a^s) + U^l(a^l|z), 
        \quad \text{where:}~\\
        &U^s(a^s) = \begin{cases}
            0 & \text{if}~a^s \in \Sigma \\
            u & \text{otherwise}
        \end{cases}
        ~\quad
        U^l(a^l|z) = \begin{cases}
            1 & \text{if}~a^l = z \\
            0 & \text{otherwise}
        \end{cases}
    \end{aligned}
\end{equation}
In these equations, $a^s\in \mathcal{A}^s$ is the speaker's action, $a^l\in \mathcal{A}^l$ is the listener's action, and $z \in \mathcal{Z}$ is a variable hidden from the listener and known to the speaker.
This variable is drawn from a uniform random distribution $z \sim U(\mathcal{Z})$.
We will assume that $|\mathcal{A}^l| = |\mathcal{Z}|$, so each of the listener's actions corresponds to a guess about which $z$ the speaker observed.
In our notation, $a^l = z$ indicates that a correct guess was made, which awards the team 1 utility.

The speaker's direct contribution to the team, $U^s$, is defined by whether it communicates.
If a communicative action is chosen, i.e. the action is a signal $a^s \in \Sigma$, and no utility is awarded.
In this case, the team's utility solely derives from the listener's ability to guess the correct answer.
When the speaker chooses not to communicate, i.e. $a^s \notin \Sigma$, the team is given a fixed reward $u$.
This implicitly adds a cost to communication, as the speaker must give up $u$ reward to send a signal.

\subsection{When is Communication Selected?}

We will denote the parameters for the speaker and listener policies together as $\theta$, as we assume that both agents are jointly optimised.
Given some $\theta$, we can express the expected utility as:
\begin{align}\label{eqn:exp_utilty_theta}
    E[U~|~\theta] = u P(a^s \notin \Sigma~|~\theta) + P(a^l = z~|~\theta)
\end{align}
We will refer to $P(a^l = z~|~\theta)$ as the \textit{guess accuracy}.
Given $|\Sigma| = |A^l|$, it is always possible to successfully communicate.
So when searching for parameterisations that maximise the expected utility we can consider two cases:
\begin{align}
    \max_{\theta}E[U~|~\theta] = \max \Big\{
        u + \frac{1}{|\mathcal{Z}|}, 1
    \Big\}
\end{align}
The first case represents the strategy where the speaker never communicates, therefore the team always gets $u$ reward, and the listener makes the correct guess one in $|\mathcal{Z}|$ times.
We will denote this quantity as $\alpha$ and refer to this strategy as the \textit{optimal non-communicative strategy}.
Similarly, the second case is the \textit{optimal communicative strategy}, in which the speaker always signals the correct answer to the listener.
Therefore, whether communication is the optimal overall strategy is given by:
\begin{align}\label{eqn:comm_optimal_cond}
    \alpha < 1,\quad \text{where}~\alpha = u + \frac{1}{|\mathcal{Z}|}
\end{align}
The condition in Equation~\ref{eqn:comm_optimal_cond} only tells us that communication is \textit{globally optimal}.
To explore local optima, we will consider a simple greedy population-based optimiser, for instance, a genetic algorithm.
At each optimisation step, the optimiser selects the member of the $\theta_i$ that has the highest $E[U|\theta_i]$.
Suppose we have two possible parameterisations, $\theta_1$ and $\theta_2$ (i.e. a population of size two).
We are interested in the case where one of the parameterisations corresponds to the speaker being more likely to communicate successfully:
\begin{equation}\label{eqn:t2_prefers_comm}
    \begin{aligned}
        P(a^s\in \Sigma ~|~ \theta_2) &> P(a^s\in \Sigma ~|~ \theta_1) \\
        P(a^l = z~|~a^s\in \Sigma, \theta_2) &> P(a^l = z~|~a^s\in \Sigma, \theta_1)
    \end{aligned}
\end{equation}
The first condition states that the speaker under $\theta_2$ is more likely to send signals.
The second condition states that when this happens, the listener is more likely to guess correctly, i.e. communication is more often successful.
The optimiser will select the more communicative parameterisation $\theta_2$, and thus we will say that communication is \textit{locally optimal}  if:
\begin{align}\label{eqn:t2_selection_criteria}
    E[U~|~\theta_2] > E[U~|~\theta_1]
\end{align}
Now we may ask the following question: given that a communicative strategy competes against an optimal non-communicative strategy, how high does the guess accuracy need to be for communication to be selected?

This is important because if any increase in accuracy results in the communicative strategy being selected, then any small amount of communication that emerges will be selected by the optimiser.
This would allow communication to emerge gradually as a shift away from optimal non-communicative strategies.
However, we will now show that in this competition there is a significant gap that needs to be overcome for communication to emerge, and therefore it cannot happen incrementally.

\begin{theorem}\label{thm:local_noncomm_optima_gap}
    Given a $\theta_1$ that implements an optimal non-communicative strategy, and a communicative strategy $\theta_2$.
    Under $\theta_2$, when the speaker sends a signal, the guess accuracy must be greater than $\alpha$ for $\theta_2$ to be selected.
    Formally, the following must hold for $\theta_2$ to be selected:
    \begin{align}\label{eqn:local_noncomm_optima_gap}
         P(a^l = z | a^s \in \Sigma, \theta_2) > \alpha
    \end{align}
\end{theorem}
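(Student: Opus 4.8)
The plan is to write $E[U \mid \theta_1]$ and $E[U \mid \theta_2]$ explicitly using the expected-utility expression \eqref{eqn:exp_utilty_theta}, and then unpack the selection criterion \eqref{eqn:t2_selection_criteria} into a condition on the conditional guess accuracy $P(a^l = z \mid a^s \in \Sigma, \theta_2)$. First I would observe that because $\theta_1$ implements the optimal non-communicative strategy, it never signals, so $P(a^s \in \Sigma \mid \theta_1) = 0$ and its guess accuracy is the baseline $1/|\mathcal{Z}|$; hence $E[U \mid \theta_1] = u + 1/|\mathcal{Z}| = \alpha$.

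Next I would decompose the guess accuracy of $\theta_2$ by conditioning on whether a signal is sent, via the law of total probability:
\[
P(a^l = z \mid \theta_2) = P(a^l = z \mid a^s \in \Sigma, \theta_2)\, p + P(a^l = z \mid a^s \notin \Sigma, \theta_2)\,(1 - p),
\]
where I write $p := P(a^s \in \Sigma \mid \theta_2)$. The crucial sub-step is to evaluate the second term: when the speaker does not signal, the listener only ever observes the `silent' symbol, which carries no information about $z$. Since $z$ is drawn uniformly from $\mathcal{Z}$ and the listener's silent-response distribution is independent of $z$, summing over the $|\mathcal{Z}|$ equally likely values forces $P(a^l = z \mid a^s \notin \Sigma, \theta_2) = 1/|\mathcal{Z}|$ regardless of the listener's policy. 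I expect this to be the main obstacle, in that it is the one step requiring a genuine argument rather than bookkeeping: it is where the uniform prior on $z$ and the uninformativeness of silence do the real work.

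With that substitution, I would plug $E[U \mid \theta_2]$ into \eqref{eqn:t2_selection_criteria} and subtract $\alpha = E[U \mid \theta_1]$ from both sides. After cancelling the $u$ and $1/|\mathcal{Z}|$ baseline terms, the inequality collapses to
\[
p\Big(P(a^l = z \mid a^s \in \Sigma, \theta_2) - u - \tfrac{1}{|\mathcal{Z}|}\Big) > 0.
\]
Finally, the first condition of \eqref{eqn:t2_prefers_comm} gives $p = P(a^s \in \Sigma \mid \theta_2) > P(a^s \in \Sigma \mid \theta_1) = 0$, so $p > 0$ and I can divide through by it without flipping the inequality. This yields $P(a^l = z \mid a^s \in \Sigma, \theta_2) > u + 1/|\mathcal{Z}| = \alpha$, which is exactly \eqref{eqn:local_noncomm_optima_gap}, completing the argument.
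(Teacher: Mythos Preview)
Your proposal is correct and follows essentially the same approach as the paper: both decompose $P(a^l=z\mid\theta_2)$ via the law of total probability on $\{a^s\in\Sigma\}$, use the uniform prior on $z$ to force $P(a^l=z\mid a^s\notin\Sigma,\theta_2)=1/|\mathcal{Z}|$, and then substitute into the selection criterion \eqref{eqn:t2_selection_criteria}. The only minor difference is that the paper first derives the general intermediate inequality \eqref{eqn:t2_selection_criteria_2} (which it reuses for Theorem~\ref{thm:local_comm_optima}) before specialising to $E[U\mid\theta_1]=\alpha$, whereas you specialise immediately; you are also slightly more explicit in invoking \eqref{eqn:t2_prefers_comm} to justify $p>0$ before dividing.
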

\begin{proof}
% \textbf{Proof.}
We will first rewrite the expected utility as:
\begin{align}\label{eqn:exp_utilty_theta2}
    \begin{aligned}
    E[U~|~\theta] &= u P(a^s \notin \Sigma~|~\theta) + P(a^l = z~|~\theta)
    \end{aligned}
    \\
    \begin{aligned}
    &= u P(a^s \notin \Sigma~|~\theta) \\
    &+ P(a^l = z | a^s \notin \Sigma, \theta) P(a^s \notin \Sigma~|~\theta) \\
    &+P(a^l = z | a^s \in \Sigma, \theta) P(a^s \in \Sigma~|~\theta)
    \end{aligned}
\end{align}

In words, the expected utility is broken down into three terms by applying the law of total probability to the second term of the first line.
This breaks the probability that the listener gets the correct answer into the following cases:
\begin{enumerate}[noitemsep]
    \item The speaker did not send a signal, but the listener got the correct answer anyway.
    Regardless of how the listener acts, there is always a one in $|\mathcal{Z}|$ probability that they get the correct answer. Therefore, 
    \begin{align}\label{eqn:random_listener_correct_prob}
    P(a^l = z | a^s \notin \Sigma, \theta) = \frac{1}{|\mathcal{Z}|}, ~\forall \theta
    \end{align}
    \item The speaker did send a signal, and the quantity $P(a^l = z | a^s \in \Sigma, \theta_2)$ denotes the (improved) accuracy of the listener.
\end{enumerate}
Substituting Equation~\ref{eqn:random_listener_correct_prob} into the equation for $E[U~|~\theta]$, we can group the terms that correspond to the speaker not signalling: 
\begin{align}\label{eqn:exp_utilty_theta2}
    \begin{aligned}
    E[U~|~\theta] =~& \Big(u + \frac{1}{|\mathcal{Z}|}\Big) P(a^s \notin \Sigma~|~\theta) \\ 
    &+ P(a^l = z | a^s \in \Sigma, \theta) P(a^s \in \Sigma~|~\theta)
    \end{aligned}
    \\
    \begin{aligned}
     =~& \alpha P(a^s \notin \Sigma~|~\theta) \\ 
    &+ P(a^l = z | a^s \in \Sigma, \theta) P(a^s \in \Sigma~|~\theta)
    \end{aligned}
\end{align}

Therefore, we can substitute this expression for $E[U~|~\theta_2]$ into the selection criteria in Equation~\ref{eqn:t2_selection_criteria} to find:
\begin{align}\label{eqn:t2_selection_criteria_2}
P(a^l = z | a^s \in \Sigma, \theta_2) > \alpha + \frac{E[U|\theta_1] - \alpha}{P(a^s \in \Sigma~|~\theta_2)}
\end{align}
The LHS of this inequality is the probability that the listener gets the correct answer, given that the speaker sent a signal.
So this measures how successfully the pair can communicate under $\theta_2$.
If we hold $\alpha$ constant, the RHS is a function of the probability that the speaker chooses to send a signal under $\theta_2$, and the expected utility of the other parameterisation $E[U|\theta_1]$.
For Theorem~\ref{thm:local_noncomm_optima_gap} we have assumed that $\theta_1$ corresponds to some speaker/listener pair that does not communicate, thus:
\begin{align}\label{eqn:bound_on_noncomm_eu}
    \frac{1}{|\mathcal{Z}|} \leq E[U|\theta_1] \leq \alpha
\end{align}
The lower bound of this inequality corresponds to a speaker that always tries to communicate with a listener that randomly guesses, this is the \textit{poorest performing non-communicative strategy}. 
The upper bound corresponds to a speaker that never sends a signal, so this is the optimal non-communicative strategy.
Applying the upper bound to Equation~\ref{eqn:t2_selection_criteria_2}, the second term disappears and we are left with Equation~\ref{eqn:local_noncomm_optima_gap}, thereby demonstrating Theorem~\ref{thm:local_noncomm_optima_gap}.
% \qedsymbol{}
\end{proof}

\begin{theorem}\label{thm:local_comm_optima}
    Given that $\theta_1$ implements the poorest performing non-communicative strategy, any communicative strategy $\theta_2$ can be selected.
\end{theorem}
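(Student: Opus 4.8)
The plan is to recycle the selection criterion already derived in the proof of Theorem~\ref{thm:local_noncomm_optima_gap}, namely Equation~\ref{eqn:t2_selection_criteria_2}, which states that $\theta_2$ is selected precisely when
\begin{align*}
P(a^l = z \mid a^s \in \Sigma, \theta_2) > \alpha + \frac{E[U\mid\theta_1] - \alpha}{P(a^s \in \Sigma~|~\theta_2)}.
\end{align*}
The entire difference between the two theorems lives in which end of the bound in Equation~\ref{eqn:bound_on_noncomm_eu} on $E[U\mid\theta_1]$ we substitute. Theorem~\ref{thm:local_noncomm_optima_gap} used the upper bound $E[U\mid\theta_1]=\alpha$; here, since $\theta_1$ is stipulated to be the poorest performing non-communicative strategy, I would substitute the lower bound $E[U\mid\theta_1] = \frac{1}{|\mathcal{Z}|}$.

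With that substitution, recalling from Equation~\ref{eqn:comm_optimal_cond} that $\alpha = u + \frac{1}{|\mathcal{Z}|}$, the numerator collapses to $E[U\mid\theta_1]-\alpha = -u$, so the right-hand threshold becomes $\alpha - \frac{u}{P(a^s \in \Sigma~|~\theta_2)}$. The key step is then to bound this threshold from above: because $P(a^s \in \Sigma~|~\theta_2) \le 1$ we have $\frac{u}{P(a^s \in \Sigma~|~\theta_2)} \ge u$, hence the threshold is at most $\alpha - u = \frac{1}{|\mathcal{Z}|}$. In other words, the bar that the guess accuracy must clear has dropped all the way down to the random-guessing baseline $\frac{1}{|\mathcal{Z}|}$, rather than the gap of size $\alpha$ exhibited in Theorem~\ref{thm:local_noncomm_optima_gap}.

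To finish, I would observe that any $\theta_2$ whose listener exploits the signal at all guesses correctly with probability at least $\frac{1}{|\mathcal{Z}|}$ when a signal is sent, so it meets (and, when $u>0$ and the speaker is ever silent, strictly exceeds) the threshold and is therefore selected. The only thing to handle carefully is the strictness of the inequality in the boundary case $P(a^s\in\Sigma~|~\theta_2)=1$ together with $u=0$, where the threshold equals $\frac{1}{|\mathcal{Z}|}$ exactly and one must note that a genuinely communicative strategy transmits strictly positive information, i.e.\ beats random guessing, and so still clears the bar. I expect this edge case, rather than any real computation, to be the only obstacle; the substance of the result is simply that competing against the worst non-communicative strategy removes the $\alpha$-sized gap identified in Theorem~\ref{thm:local_noncomm_optima_gap}, allowing communication to be favoured incrementally.
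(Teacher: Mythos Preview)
Your proposal is correct and follows essentially the same route as the paper: substitute the lower bound $E[U\mid\theta_1]=\tfrac{1}{|\mathcal{Z}|}$ from Equation~\ref{eqn:bound_on_noncomm_eu} into Equation~\ref{eqn:t2_selection_criteria_2}, bound the resulting threshold above by $\tfrac{1}{|\mathcal{Z}|}$ (the paper does this by setting $P(a^s\in\Sigma\mid\theta_2)=1$, which is algebraically the same as your $P\le 1$ step), and then invoke the second inequality of Equation~\ref{eqn:t2_prefers_comm} to obtain the strict inequality, which is exactly your ``genuinely communicative strategy beats random guessing'' observation.
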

\begin{proof}
% \textbf{Proof.}
Applying the lower bound from Equation~\ref{eqn:bound_on_noncomm_eu} to Equation~\ref{eqn:t2_selection_criteria_2}, for the poorest performing non-communicative strategy we find:
\begin{align}\label{eqn:t2_selection_criteria_low_bound}
P(a^l = z | a^s \in \Sigma, \theta_2) > \frac{1}{|\mathcal{Z}|} + u\Bigg(1 - \frac{1}{P(a^s \in \Sigma~|~\theta_2)}\Bigg)
\end{align}
We known that even by guessing randomly the accuracy is at least one in $|\mathcal{Z}|$, therefore we can set $P(a^s \in \Sigma~|~\theta_2) = 1$.
As $\theta_1$ is the poorest performing, $P(a^l = z | a^s \in \Sigma, \theta_1) = \frac{1}{|\mathcal{Z}|}$, and thereby the second assumption given Equation~\ref{eqn:t2_prefers_comm}, the condition in Equation~\ref{eqn:t2_selection_criteria_low_bound} must hold.
Intuitively, as the poorest performing is ignoring the benefit of $u$ for no improvement in listener accuracy, and improvement in accuracy with $\theta_2$ gives it the advantage.
Thus, the communicative strategy must be locally optimal in this situation.
\end{proof}

To summarise, in this subsection, we have shown that given a $\theta_1$ and $\theta_2$ where $\theta_1$ is an optimal non-communicative strategy, and $\theta_2$ is a communicative strategy, $\theta_2$ will only be selected by a greedy optimisation algorithm if the increase in the listener's accuracy is greater than the non-zero constant $\alpha$.
Put simply, to escape the non-communicative local optimum at $\theta_1$ in a single optimisation step the communicative policy needs to make a large improvement to the listener's accuracy.
This jump may be unlikely, as the agents have to go from no communication to a high-performing level of communication in a single step.

\subsection{The Consequences of Listener Competency}

So far, we have explored different conditions in which communicative strategies will be selected.
We have treated both the speaker and listener as jointly parameterised by some $\theta$, but let us now separately consider the speaker parameters $\theta^s$ and listener parameters $\theta^l$.
In this section, we look at how the listener's sensitivity to different messages impacts the potential for the speaker to evolve communication.

Ultimately, we are interested in this investigation to see how being stuck in local optima with different listeners changes the possibility for evolution to escape the local optima.
This in turn is relevant as we will argue that the presence of mimicable external signals will produce more beneficial listeners.
To progress, this analysis will focus on the case of evolution, rather than reinforcement learning, to make the arguments more concrete.
However, an analogous line of reasoning could be drawn for reinforcement learning.

\begin{figure*}
    \centering
    \captionsetup[subfigure]{justification=centering}
    \begin{subfigure}{.38\textwidth}
        \centering
        \includegraphics[height=5cm]{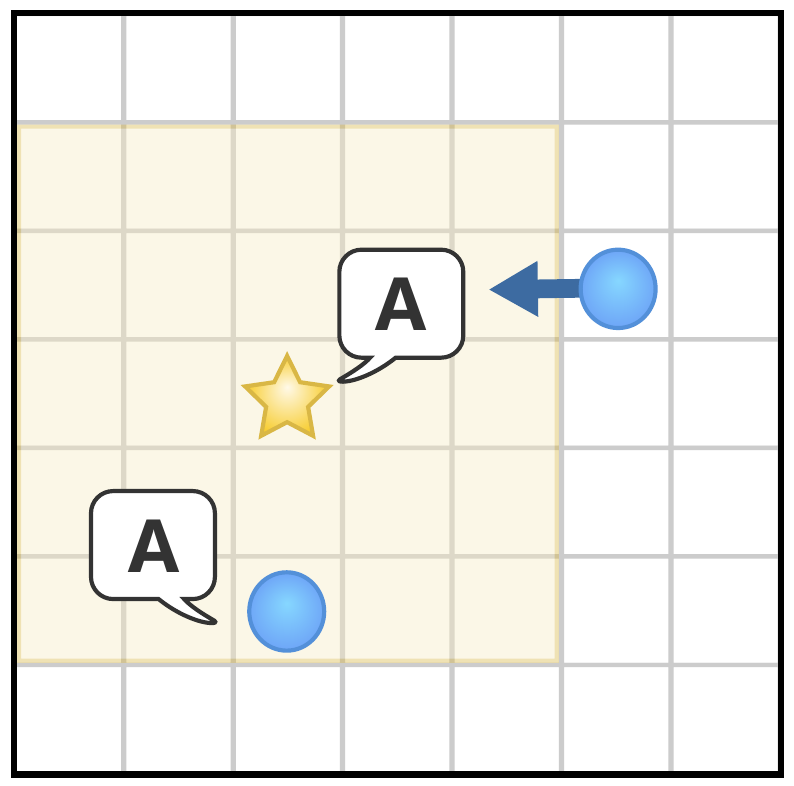}
        \caption{
            Illustration of the environment with two agents (blue circles) and one resource (gold star). \\
            One agent mimics the resource.
        }
        \label{fig:toy-environment-illustration}
    \end{subfigure}%
    \begin{subfigure}{.62\textwidth}
        \centering
        \includegraphics[height=5cm]{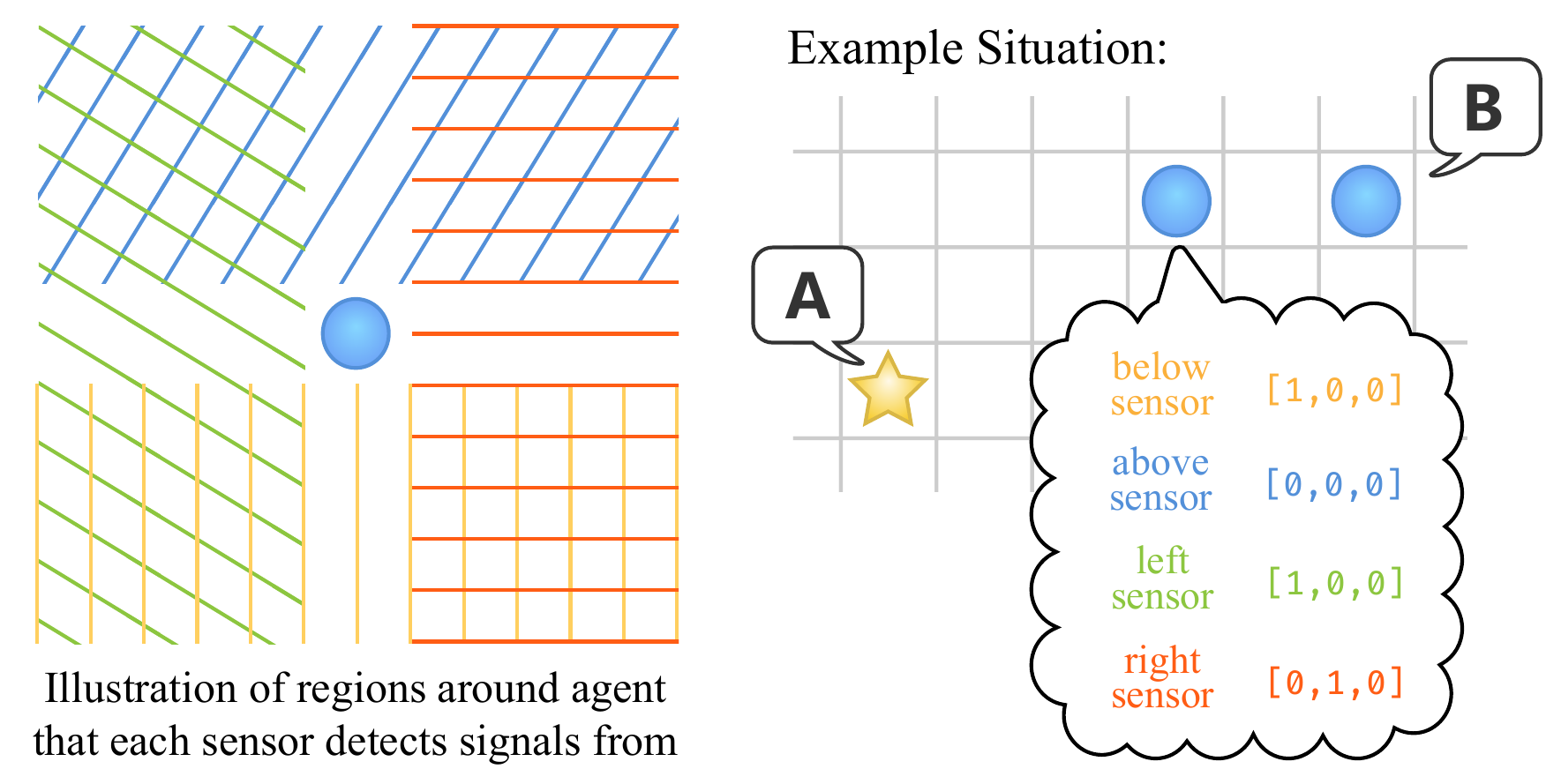}
        \caption{
            Diagram explaining the agents' signal detection mechanisms.
            Four detectors indicate which spatial region relative to the agent a signal was emitted from.
            Therefore, the agent can tell the direction a signal came from, but not the identity of the sender.
        }
        \label{fig:signal-sensor-diagram-full-explanation}
    \end{subfigure}%
    \caption{
        The environment for experimentally testing the effects of mimicry on the emergence of communication.
        Two agents must be on the same square as a resource to collect it and receive a reward.
        Agents observe whether or not they are on the same square as a resource, and resources sometimes emit signals that can be detected by an agent within a limited number of tiles from the resource.
        In (a) the gold region indicates where the signal can be detected by an agent. 
    }
\end{figure*}

We will start by comparing two cases involving different listener strategies.
First, a listener parameterised by $\theta^l_U$ that selects actions uniformly at random, independently of the message it receives:
\begin{equation}
\begin{gathered}
    \forall m \in \Sigma,~P(a^l|a^s=m,\theta^l_U) = P(a^l|\theta^l_U) = \frac{1}{|\mathcal{Z}|}
    % \forall m \in \Sigma,~P(a^l|a^s=m,\theta^l_U) = P(a^l|\theta^l_U)
    % \\
    % \text{and}~P(a^l|\theta^l_U) = \frac{1}{|\mathcal{Z}|}
\end{gathered}
\end{equation}
Secondly, we will consider a deterministic, `competent' listener $\theta^l_I$ that selects some unique action for each message it receives.
We can denote this with a deterministic bijective function $f:\Sigma\rightarrow\mathcal{A}^l$, using the indicator function $\mathds{1}$:
\begin{align}
    P(a^l|a^s=m,\theta^l_U) = \mathds{1}[f(m) = a^l]
\end{align}

Next, suppose we have a population of speakers $\theta^s_i$ that are subject to selection.
For our two cases, the important difference between speakers is the effects on the listener's guess accuracy.
Put differently, for communication to evolve there needs to be variation in the expected $U^l$ between $\theta^s_i$:
\begin{align}
\text{Var}_{\theta^s_i}[E[U^l|\theta^s_i,\theta^l]] > 0
\end{align}

\begin{lemma}
    For $\theta^l_U$, $\text{Var}_{\theta^s_i}[E[U^l|\theta^s_i,\theta^l_U]] = 0$
\end{lemma}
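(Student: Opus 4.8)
The plan is to show that the quantity $E[U^l\mid\theta^s_i,\theta^l_U]$ does not depend on the speaker index $i$ at all, so that the population of values $\{E[U^l\mid\theta^s_i,\theta^l_U]\}_i$ is constant and its variance is therefore zero. First I would write out the expectation explicitly by unrolling the generative process: $z$ is drawn from $P(z)$, the speaker emits $m\sim\pi^s_{\theta^s_i}(z)$, and the listener emits $a^l\sim P(\cdot\mid m,\theta^l_U)$. This gives the nested sum
\begin{align}
E[U^l\mid\theta^s_i,\theta^l_U] = \sum_{z}P(z)\sum_{m}P(m\mid z,\theta^s_i)\sum_{a^l}P(a^l\mid m,\theta^l_U)\,U^l(a^l\mid z).
\end{align}

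The key step is to evaluate the innermost sum using the defining property of $\theta^l_U$, namely $P(a^l\mid m,\theta^l_U)=\frac{1}{|\mathcal{Z}|}$ for every $m$ and $a^l$. Since $U^l(a^l\mid z)=\mathds{1}[a^l=z]$, exactly one term of the innermost sum survives, so $\sum_{a^l}P(a^l\mid m,\theta^l_U)\,U^l(a^l\mid z)=\frac{1}{|\mathcal{Z}|}$ \emph{independently} of both $m$ and $z$. Because this inner value no longer depends on $m$, it factors out of the sum over $m$, and the remaining $\sum_m P(m\mid z,\theta^s_i)=1$ as it is a probability distribution. Every factor involving $\theta^s_i$ thereby cancels, leaving $E[U^l\mid\theta^s_i,\theta^l_U]=\sum_z P(z)\cdot\frac{1}{|\mathcal{Z}|}=\frac{1}{|\mathcal{Z}|}$.

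Finally, since this evaluates to the constant $\frac{1}{|\mathcal{Z}|}$ regardless of $i$, every member of the speaker population shares the same expected listener utility, and the variance of a constant is zero, which gives the claim. There is no substantive analytic obstacle here; the only place to take care is the bookkeeping of the nested expectation, specifically applying the listener's independence from $m$ \emph{before} marginalising over the speaker's message distribution, so that it is clear the speaker's policy is the only object varying with $i$ and it drops out entirely. The conceptual content worth foregrounding is that a listener which ignores its input renders the speaker's policy irrelevant to utility, so the selection pressure on speakers vanishes — precisely the obstruction to emergent communication that the surrounding analysis is designed to expose, and the foil against which the competent listener $\theta^l_I$ will later be contrasted.
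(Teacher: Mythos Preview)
Your proof is correct and follows the same route as the paper's: you show $E[U^l\mid\theta^s_i,\theta^l_U]=\tfrac{1}{|\mathcal{Z}|}$ for every speaker $\theta^s_i$, whence the variance vanishes. The paper's version is a terse proof sketch stating only the key identity and conclusion, whereas you explicitly unroll the nested expectation and show where $\theta^s_i$ drops out, but the underlying argument is identical.
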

\begin{proofsketch}
\begin{align}
    P(a^l = z | \theta^l_U) = \frac{1}{|\mathcal{Z}|},
    ~\therefore E[U^l|\theta^s_i,\theta^l_U]= \frac{1}{|\mathcal{Z}|}
\end{align}
As the expected utility is constant with respect to $\theta^s_i$, the variance must equal zero.
\end{proofsketch}

With this in mind, we can return to the problem of being stuck in local optima.
Theorem~\ref{thm:local_noncomm_optima_gap} tells us that the guess accuracy (conditioned on signalling) for a communicative strategy needs to be greater than $\alpha$ for the strategy to be selected over a non-communicative optimal strategy $\theta_1$.

For the case of a realistic evolving population at this local optima $\theta_1$, most of the variation in the population will be small changes to the conditional action distributions of the agents.
Given that we are assuming that this population has converged on the local optima, the genetic diversity may be relatively low as many genes have reached fixation in the population.
For simplicity, suppose that the listener's strategy is fixed throughout the population, and we can consider selection against the uniform random listener $\theta^l_U$ or the competent listener $\theta^l_I$.
Equation~\ref{eqn:local_noncomm_optima_gap} only applies when the speaker sends a signal (the probability is conditioned on $a^s \in \Sigma$), so we can focus on those cases.

Now suppose that mutation or recombination events have probability $p$ of transferring at least $\alpha$ probability mass onto some action $a$ for any of the agents' conditional action distributions.
So for instance, before a mutation, a speaker under $\theta_1$ will always choose some non-communicative action $a'$ when they observe $z$, so $P(a'|z)=1$.
But after the mutation, some other, arbitrary action $a$ could have probability $\alpha < P(a|z)$, and $P(a'|z) < \alpha$.
For the competent listener, we replace the probability that $a^l = z$ with $f(a^s) = z$, as $f$ is deterministic and bijective.
This means that whether the condition in Equation~\ref{eqn:local_noncomm_optima_gap} holds is dependent on if the speaker selects the correct signal.
Given that a mutation as described above may shift $\alpha$ probability mass onto any of the signals, the chance of getting a mutation that produces the correct behaviour is $p / |\mathcal{A}^s|$.

\begin{figure*}
    \centering
    \captionsetup[subfigure]{justification=centering}
    \begin{subfigure}{.425\textwidth}
        \centering
        \includegraphics[height=4.25cm]{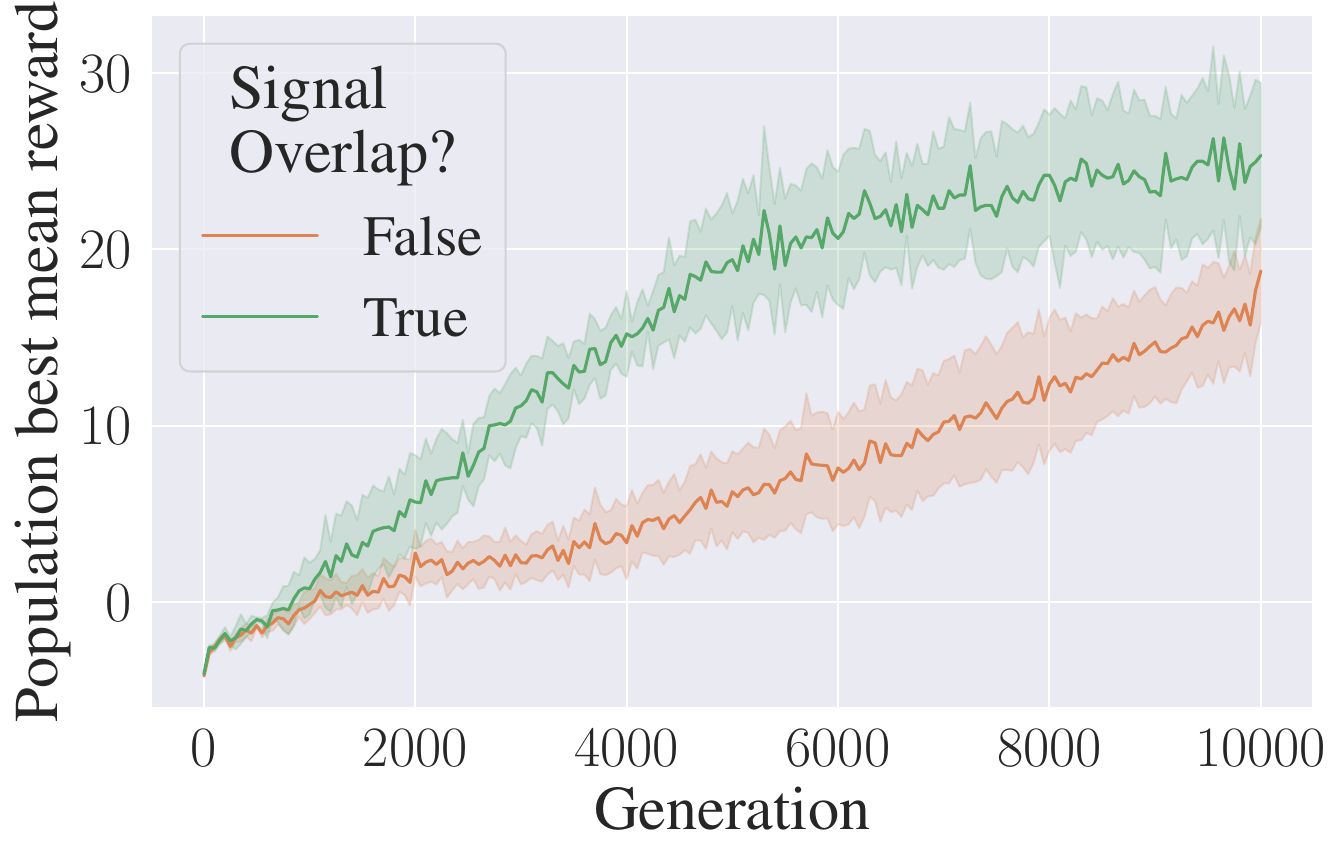}
        \caption{
            Mean reward for the best individual in\\the population at each generation.
        }
        \label{fig:evo_reward_curves}
    \end{subfigure}%
    \hspace{.05cm}
    \begin{subfigure}{.425\textwidth}
        \centering
        \includegraphics[height=4.25cm]{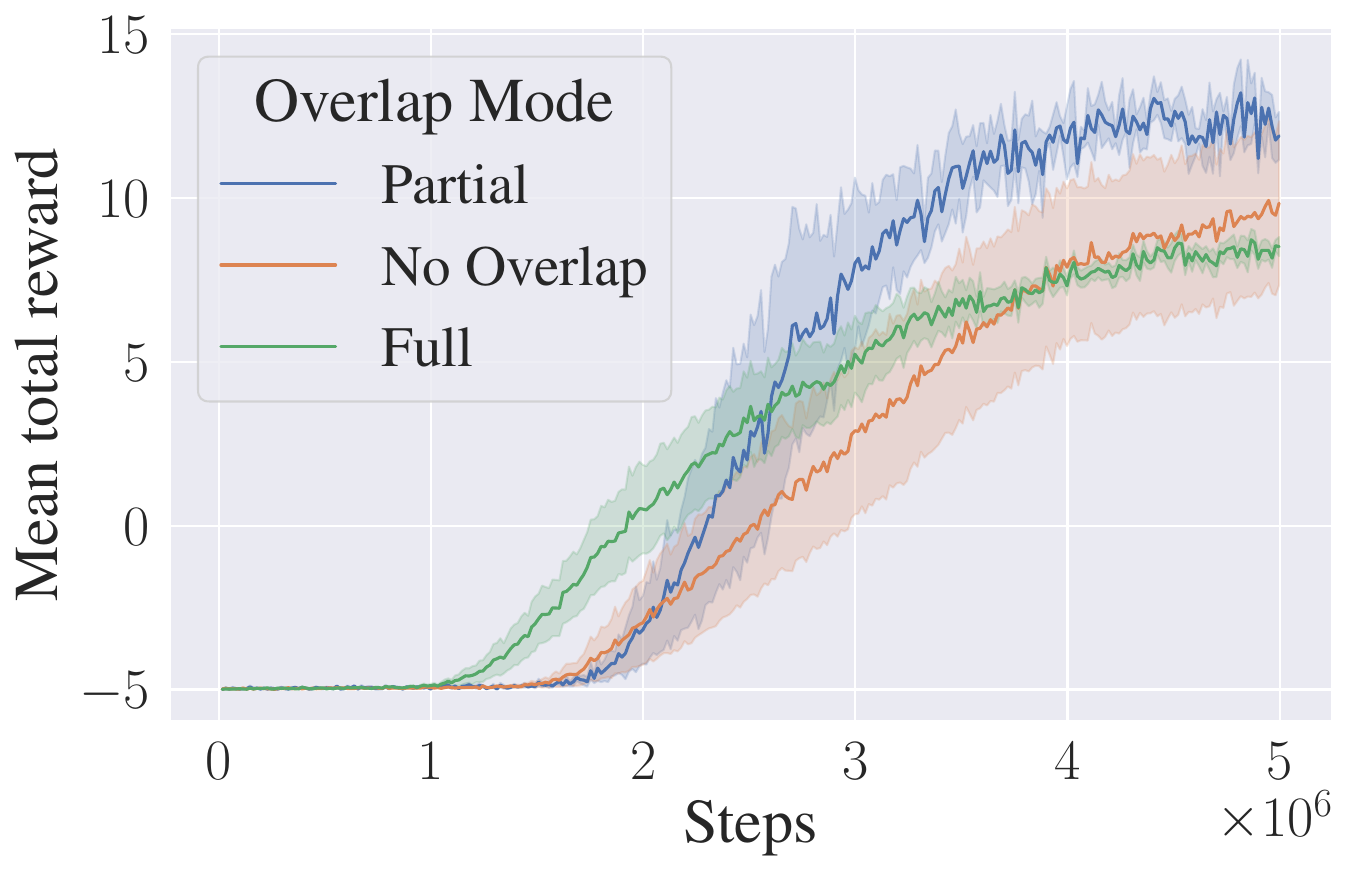}
        \caption{Mean reward of the MAPPO policy after learning\\from the given number of environment steps.}
        \label{fig:mappo_reward_curves}
    \end{subfigure}%
    \caption{Reward curves for Evolution (a) and DMARL (b). All curves are means from multiple seeds with standard error bars.}
    \vspace{-.25cm}
\end{figure*}
% \begin{figure*}
%     \centering
%     \captionsetup[subfigure]{justification=centering}
%     \begin{subfigure}{.3\linewidth}
%         \centering
%         \hspace{-1cm}
%         % \includegraphics[width=0.9\linewidth]{figures/arch.pdf}
%         \includegraphics[height=.65\linewidth]{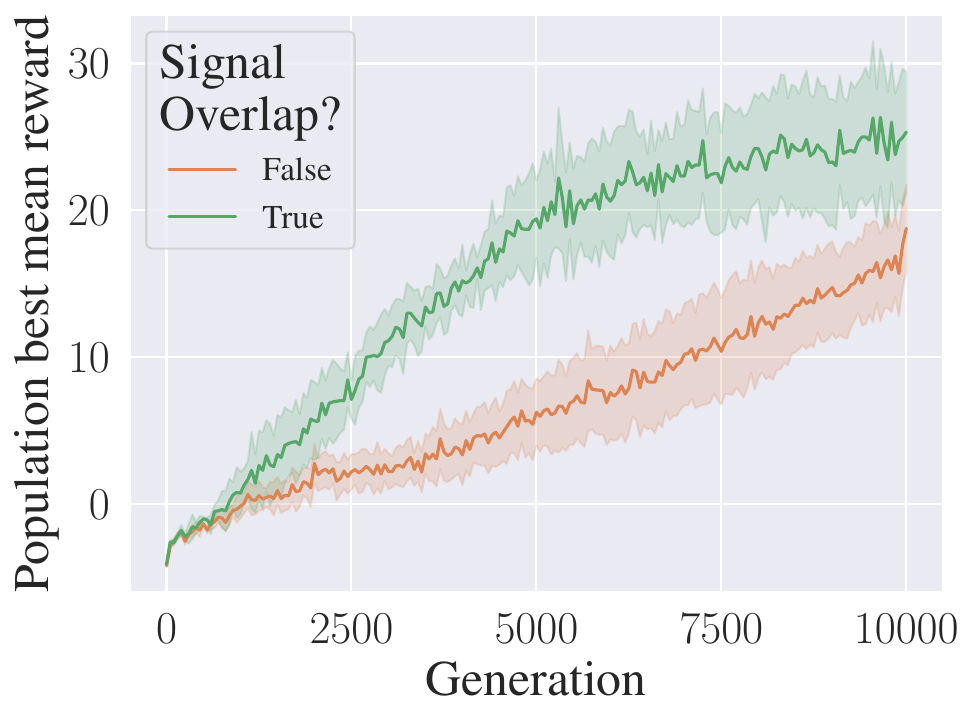}
%         \caption{
%             Evolution reward curves.
%         }
%         \label{fig:evo_reward_curves}
%     \end{subfigure}%
%     % \hspace{.04\textwidth}
%     \begin{subfigure}{.3\linewidth}
%         \centering
%         \hspace{-1cm}
%         % \includegraphics[width=0.9\linewidth]{figures/arch.pdf}
%         \includegraphics[height=.65\linewidth]{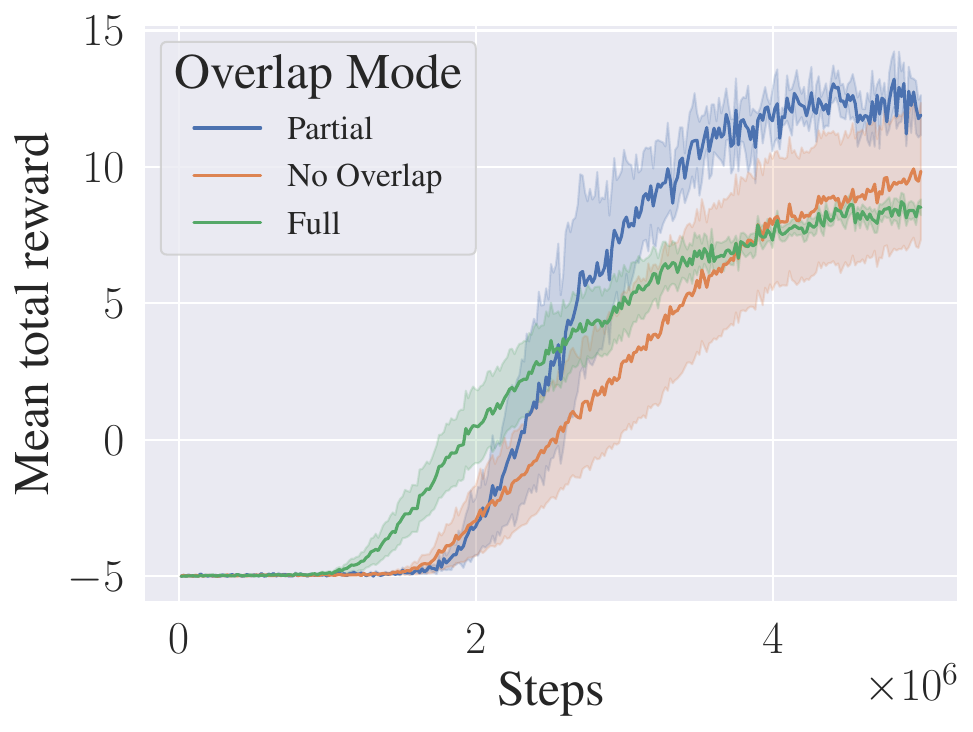}
%         \caption{RL reward curves.}
%         \label{fig:mappo_reward_curves}
%     \end{subfigure}%
%     % \hspace{.04\textwidth}
%     \begin{subfigure}{.3\textwidth}
%         \includegraphics[height=.65\linewidth]{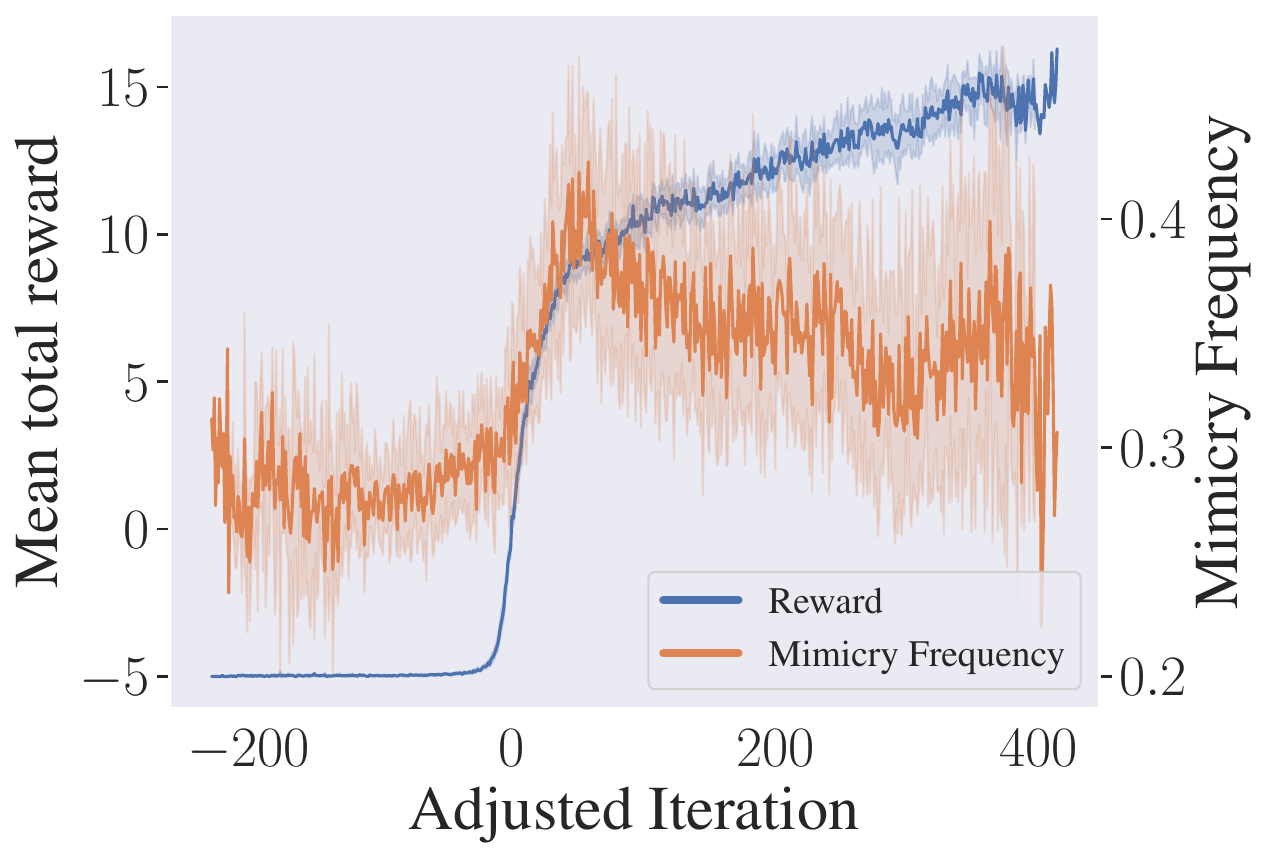}
%         \caption{}\label{fig:reward_and_mimicry_frequency}
%     \end{subfigure}%
%     \caption{Reward curves for Evolution (a) and DMARL (b) with different signal overlap settings. All curves are means averages from multiple seeds with standard error bars.}
% \end{figure*}

We can perform the same analysis for the uniformly random listener $\theta^s_U$.
The key difference is that in order to get the same behaviour, two of these kinds of mutations need to occur simultaneously.
One that shifts probability for the listener, and one for the speaker.
For the speaker, the probability needs to shift to any of the signals, and for the listener, the specific response to that signal needs to shift to the correct output. This probability must be less than the following:
\begin{align}
    \Bigg(p\frac{|\Sigma|}{|\mathcal{A}^s|}\Bigg)\cdot \Bigg(p\frac{1}{|\Sigma|}\cdot\frac{1}{|\mathcal{A}^l|} \Bigg) = \frac{p^2}{|\mathcal{A}^s||\mathcal{A}^l|}
\end{align}

Finally, we can tie this back to mimicry by reintroducing the externally generated useful signals from the previous section on independent optimisers.
Recall that $\mathrm{S}$ denotes the event that a signal received by the listener originates from an external source.
The key argument here is that for any $P(\mathrm{S})$, if optimisation is stuck in a non-communicative local optima, the listener will be optimised towards some $\theta^l_I$, provided $P(z|m,\mathrm{S})$ is deterministic.
As this happens, the chance for communicative behaviour from the speaker to emerge increases, for the reasons we have just seen.

% To formalise this line of thinking, when asking whether the inequality in Equation~\ref{eqn:local_noncomm_optima_gap} may hold for distribution of speakers $\theta^s_i$ and listener $\theta^l$, the answer depends on:
% \begin{align}
%     \alpha < \max_{i} P(a^l = z | a^s \in \Sigma, \theta^l,\theta^s_i)
% \end{align}
% Combining with Equation \ref{eqn:listener_eu_decomposed}, for the deterministic listener we know that $a^l = f(m)$, so:
% % \begin{align}
% %     \alpha < \max_i \sum_{m\in\mathcal{Z}}\frac{1}{|\mathcal{Z}|} P(a^s=m|a^s\in\Sigma,\theta^s_i)
% % \end{align}
% \begin{align}
%     \alpha < \max_{i} P(f(a^s) = z | a^s \in \Sigma, \theta^l_I,\theta^s_i)
% \end{align}
% Notice that as $P(f(a^s) = z |a^s\in\Sigma,\theta^s_i)$ is conditional the speaker sending a signal, therefore, $P(a^s\in\Sigma|\theta^s_i)$ may be arbitrarily low.
% Intuitively, the speaker does not need to communicate much, but when it does it needs 

\section{Experiments}

\subsection{Gridworld Environment}

To investigate the possible effects of mimicry on the emergence of communication, we introduce a simple cooperative gridworld environment.
In this environment, two agents must collect resources by simultaneously being on the same tile as the resource.
After which, a new resource is spawned at a random location.
The agents can be viewed as predators and the resources as prey, but we will not use this terminology as the resources are static (i.e. non-agentic).
The game is a sequential decision-making problem in which the goal is to collect as many resources as possible within 50 timesteps.
A reward of +10 is given when a resource is collected, and a penalty of -0.1 is given otherwise.

At each time step, each agent chooses between five spatial actions --- \textit{remain still}, \textit{go up}, \textit{go down}, \textit{go left}, or \textit{go right} --- and $|\Sigma_A|$ communicative actions.
If a spatial action is chosen, the agent moves accordingly.
If a communicative action is chosen, a signal is emitted from the agent's location that can be observed by the other agent in the next time step.
$\Sigma_A \subseteq \Sigma$ denotes the set of possible signals that an agent can emit.
Although there is no explicit cost to communication, a time step in which an agent emits a signal is a time step in which they are not moving towards a resource.
% So to maximise reward, agents should only communicate when necessary.

\textbf{Spatial Signals.}
As mentioned in the previous section, for mimicry to be possible the receiver cannot \textit{a priori} know the source of a given signal.
Therefore, systems such as dedicated communication channels are not applicable.
Instead, signals are emitted from a source at a given volume.
The volume dictates the number of tiles between a source and a receiver within which the signal is detectable.
In this environment, the \textit{model} that the \textit{mimic} will imitate is the resources.
Resources emit signals on each time step with probability $P_{res}$ and volume $v_{res}$ from a set $\Sigma_{res} \subseteq \Sigma$.
The agent signals and the resource signals may or may not overlap, and this parameter controls whether or not agents have the opportunity to evolve mimicry.

On that note, agents are equipped with four sensors that indicate which direction a signal comes from.
Each sensor is sensitive to signals emitted from a particular region of the world, relative to the position of the agent.
The agent can detect signals from the set $\Sigma$, so each sensor is a $|\Sigma|$-dimensional binary vector, where each entry indicates whether the corresponding signal has been emitted within the sensor's detection region.
Figure \ref{fig:signal-sensor-diagram-full-explanation} illustrates how each of the sensors works.
Formally, suppose that the agent is in position $a_x, a_y$ and the source emits the signal with index $m$, volume $v$, and from position $s_x, s_y$.
We denote each of the sensors as $\mathbf{v}_\uparrow, \mathbf{v}_\downarrow, \mathbf{v}_\rightarrow, \mathbf{v}_\leftarrow$, and let $d = |a_x - s_x| + |a_y - s_y|$.

\begin{itemize}[noitemsep]
    \item \textbf{Above sensor}: $\mathbf{v}_\uparrow[m] = 1$ if $s_y > a_y$ and $d < v$.
    \item \textbf{Below sensor}: $\mathbf{v}_\downarrow[m] = 1$ if $s_y < a_y$ and $d < v$.
    \item \textbf{Right sensor}: $\mathbf{v}_\rightarrow[m] = 1$ if $s_x > a_x$ and $d < v$.
    \item \textbf{Left sensor}: $\mathbf{v}_\leftarrow[m] = 1$ if $s_x < a_x$ and $d < v$.
\end{itemize}

When the information from the sensors is combined, the agent can distinguish signals coming from eight regions.
As the agent is not provided with information regarding the volume of the signal, it cannot distinguish between nearby sources with low volume and far-away sources with high volume.
When multiple detectable signals are made in the same region, they are also indistinguishable.
If they are made in different regions, then a unique pattern of sensor inputs will encode this information.
The `Example Situation' in Figure \ref{fig:signal-sensor-diagram-full-explanation} shows a case in which the agent in the middle (illustrated by the blue circle) is detecting two signals.
The first signal is emitted from the resource (illustrated by the golden star), indicated with the `A' is the 0th possible signal.
The second signal comes from the other agent to the right, which is emitting `B', i.e. the 1st signal.
We see that as the resource is located southwest of the agent, the 0th index of the below and left sensors.

\subsection{Experimental Results}

We conducted two sets of experiments in this gridworld environment.
Firstly, we evolved agents using the `SimpleGA' method \citep{such_deep_2017}, using the implementation in the evosax library \citep{lange_evosax_2023} and a population size of 256.
Secondly, we trained deep reinforcement learning agents with the MAPPO algorithm, using an adapted implementation from the JaxMARL library \citep{rutherford_jaxmarl_2023}, with a linearly annealed learning rate starting at $2\times10^{-3}$ and 128 parallel data collection environments.
Across all experiments, agents were parameterised by recurrent neural networks with four hidden layers of dimension 128, ReLU activations, and gated recurrent units \citep{cho_learning_2014}.

For our evolution experiments, we used a 5x5 grid for the world, with $|\Sigma_A|=5$ and $|\Sigma_{res}|=1$.
To investigate the effects of mimicable signals, we looked at the case where $\Sigma_{res} \subset \Sigma_A$, which we refer to as `signal overlap' in Figure~\ref{fig:evo_reward_curves}.
We ran 10 simulations with different seeds, with and without overlap, and found that for both cases 7 of the runs failed to converge.
Figure~\ref{fig:evo_reward_curves} only shows the 6 successful runs.
We see that the capacity for mimicry has a significant effect on the emergence of communication.

For the MAPPO experiments, we found that the same parameters as the evolution runs were much too easy.
So we increased the grid size to 10x10, and $|\Sigma_{res}|=5$.
As there were now more resource signals, we could test partial and full overlap settings.
For full overlap, $\Sigma_A = \Sigma_{res}$, and for partial $|\Sigma_A \cap \Sigma_{res}| = 1$.
Again, 10 seeds were run for each setting, with all runs being successful.
Figure~\ref{fig:mappo_reward_curves} shows these reward curves, and Table~\ref{tab:mappo_performance_stats} shows the overall final statistics for these runs.

\begin{table}
    \centering
    \begin{tabular}{c|c|c}
         \textbf{Overlap Mode} & \textbf{Mean Reward}  & \textbf{Std. Reward} \\
         \hline
         Full    & 8.51             & 0.91           \\
         Partial & \textbf{11.90}   & 1.26           \\
         None    & 9.84             & \textbf{7.92}  \\
    \end{tabular}
    \caption{MAPPO performance statistics by overlap mode, measured after learning from $5\times10^6$ environment steps.}
    \label{tab:mappo_performance_stats}
    \vspace{-.35cm}
\end{table}

Firstly, in this figure, we see that the `full overlap' setting leads to the earliest initial increases in performance.
Next, we see that although the partial setting starts increasing in performance at the same time as the `no overlap' setting, it increases much faster than the other two.
An important difference between the runs with and without mimicable signals is the variance in performance.
We see that mimicry significantly reduces the standard deviation -- with almost an order of magnitude difference between the `full' and `no' overlap cases.
Lastly, we see that the `full' overlap case performs worse than without overlap, on average by the end of the training, especially when comparing the highest-performing seeds.
This shows that in this environment, while anonymous signals may help initially, optimal performance involves being able to infer the source of the signals.
This points to a potential pitfall in this kind of set-up and helps explain the results in Figure~\ref{fig:reward_and_mimicry_frequency}.

In Figure~\ref{fig:reward_and_mimicry_frequency}, we are looking at data from MAPPO training runs with partial overlap.
Training curves from different seeds have been aligned by adjusting the iteration so they all pass through the origin at the same point.
This helps us see the similarities in the curve's shapes, without averaging out the patterns due to differences in when performance started to increase.
Alongside reward, on the right $y$-axis we have the \textit{mimicry frequency}.
This is the frequency in which a communication symbol sent by an agent is one of the mimicable symbols.
We see a sharp increase in this frequency at around the same point in which performance increases.
However, after performance begins to plateau, we see the mimicry frequency begin to drop.
This further demonstrates that while mimicry can help get communication off the ground, it may be detrimental when trying to refine the communicative strategies. 

% \begin{figure*}
%     \centering
%     \includegraphics[width=0.6\linewidth]{figures/reward_and_mimicry_frequency.pdf}
%     \caption{}\label{fig:reward_and_mimicry_frequency}
% \end{figure*}

\begin{figure}
    \centering
    \includegraphics[width=.9\linewidth]{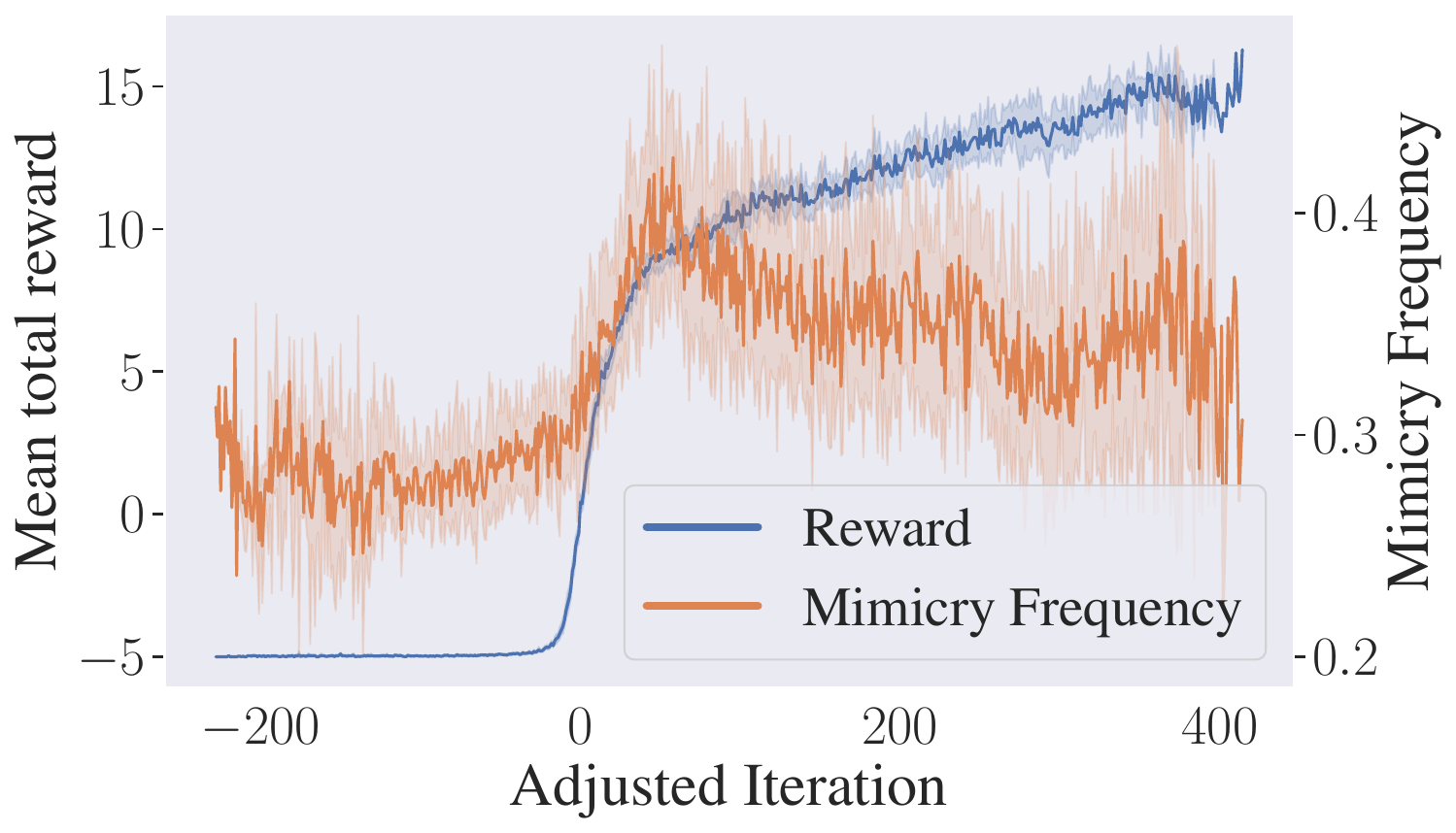}
    \caption{
        Partial overlap MAPPO reward curves aligned to the iteration in which the mean total reward exceeded 0, plotted alongside the frequency in which agents mimicked the `externally generated signals', i.e. used resource signals.
    }\label{fig:reward_and_mimicry_frequency}
    \vspace{-.25cm}
\end{figure}

\section{Conclusions}

In this paper, we have investigated the relationship between the emergence of cooperative communication, and the capacity for speakers to mimic preexisting useful signals.
We have demonstrated, analytically and empirically, that this may benefit the optimisation dynamics of populations of agents learning or evolving.
However, there are still many questions left unanswered by this analysis.
While we looked at how mimicry affects non-communicative local optima, we did not yet analytically analyse the effects when agents are randomly initialised.
On the other hand, we have not yet fully explored some of the negative side effects of this mimicry.
Although our results from Figures~\ref{fig:mappo_reward_curves} and \ref{fig:reward_and_mimicry_frequency} suggest that strategies that require disambiguation between the sources of signals may be a good place to start.

\footnotesize
\bibliographystyle{apalike}
\bibliography{references}

\end{document}